\theoremstyle{definition} %
\newtheorem{definition}{Definition}
\theoremstyle{plain} %
\newtheorem{theorem}{Theorem}
\theoremstyle{plain} %
\newtheorem{lemma}{Lemma}
\theoremstyle{plain} %
\newtheorem{proposition}{Proposition}
\let\old@ps@headings\ps@headings
\let\old@ps@IEEEtitlepagestyle\ps@IEEEtitlepagestyle
\def\psccfooter#1{%
    \def\ps@headings{%
        \old@ps@headings%
        \def\@oddfoot{\strut\hfill#1\hfill\strut}%
        \def\@evenfoot{\strut\hfill#1\hfill\strut}%
    }%
    \def\ps@IEEEtitlepagestyle{%
        \old@ps@IEEEtitlepagestyle%
        \def\@oddfoot{\strut\hfill#1\hfill\strut}%
        \def\@evenfoot{\strut\hfill#1\hfill\strut}%
    }%
    \ps@headings%
}
        \parbox{\textwidth}{\hrulefill \\ \small{24th Power Systems Computation Conference} \hfill \begin{minipage}{0.2\textwidth}\centering \vspace*{4pt} \includegraphics[scale=0.06]{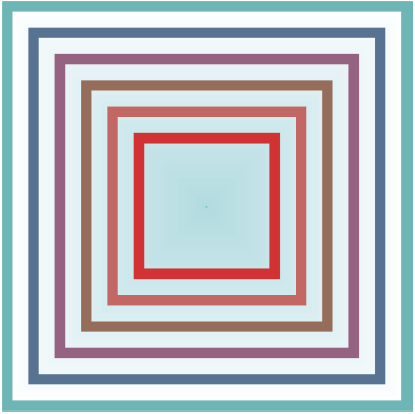}\\\small{PSCC 2026} \end{minipage} \hfill \small{Limassol, Cyprus --- 08-12 June, 2026}}%
\begin{document}
%

\title{Pooling Probabilistic Forecasts for Cooperative Wind Power Offering}


\author{\IEEEauthorblockN{Honglin Wen\IEEEauthorrefmark{1}\IEEEauthorrefmark{2},
Pierre Pinson\IEEEauthorrefmark{2}}
\IEEEauthorblockA{\IEEEauthorrefmark{1} School of Electrical Engineering, Shanghai Jiao Tong University, Shanghai, China.}\IEEEauthorblockA{\IEEEauthorrefmark{2} Dyson School of Design Engineering, Imperial College London, London, United Kingdom.}
\vspace{-2.5 em}
}

\maketitle

\begin{abstract} 
Wind power producers can benefit from forming coalitions to participate cooperatively in electricity markets. To support such collaboration, various profit allocation rules rooted in cooperative game theory have been proposed. However, existing approaches overlook the lack of coherence among producers regarding forecast information, which may lead to ambiguity in offering and allocations. In this paper, we introduce a ``reconcile-then-optimize'' framework for cooperative market offerings. This framework first aligns the individual forecasts into a coherent joint forecast before determining market offers. With such forecasts, we formulate and solve a two-stage stochastic programming problem to derive both the aggregate offer and the corresponding scenario-based dual values for each trading hour. Based on these dual values, we construct a profit allocation rule that is budget-balanced and stable. Finally, we validate the proposed method through empirical case studies, demonstrating its practical effectiveness and theoretical soundness.
\end{abstract}

\begin{IEEEkeywords}cooperative game, forecast reconciliation, probabilistic forecast, wind power offering
\end{IEEEkeywords}


\section{Introduction}
Wind power producers (WPPs) typically participate in two trading floors within electricity markets and face challenges arising from the inherent uncertainty of wind generation \cite{morales2013integrating}. Aggregating wind energy resources across geographically dispersed locations can reduce power variability, which supports the view that WPPs can benefit from cooperative market participation \cite{bitar2011selling}. In this context, several studies have proposed market participation through aggregation managed by an external coordinator \cite{guerrero2015optimal}, followed by profit allocation among the participating WPPs. A variety of allocation rules have been investigated, including proportional sharing, the Shapley value, and the Nucleolus \cite{freire2014hybrid, nguyen2018sharing}.


However, most existing studies primarily focus on how to share profits once a joint offer is made, presuming a known and agreed-upon characterization of uncertainty. In practice, the distribution of wind power generation is unknown, and both WPPs and the external coordinator typically hold heterogeneous probabilistic forecasts—built from distinct models, data sources, and post-processing pipelines. Therefore, coherence (or consistency) of the pooled forecasts from different agents is a prerequisite for credible cooperation. Lacking coherence leads to conflicts between aggregate and individual scenarios, ambiguity in day-ahead offers, and fairness claims in ex-post allocations without a common reference point. Our premise is simple: no coherence, no credible cooperation.

The need for coherence is amplified by the well-documented properties of ensemble numerical weather prediction (NWP) systems used to generate wind power forecasts. Such systems often exhibit bias (systematic over- or under-prediction on average) and under-dispersion (a tendency to underestimate uncertainty) \cite{leutbecher2008ensemble, doubleday2020probabilistic}, which, in turn, undermines calibration and can degrade the profit of joint offering if left uncorrected. Therefore, prior to determining allocations, participants should first establish a common, coherent—and ideally better-calibrated—probabilistic representation of uncertainty by reconciling heterogeneous forecasts into a joint distribution.

Formally, coherence of hierarchical probabilistic forecasts requires that the forecasts lie within the appropriate linear subspace, ensuring consistency between aggregate and leaf levels. This implies that every probabilistic scenario satisfies the underlying structural constraints \cite{panagiotelis2023probabilistic}. To ensure coherence, a post-processing step known as \emph{forecast reconciliation} is performed, in which each scenario is adjusted to satisfy the structural constraints. For example, \cite{taieb2017coherent} proposed reordering scenarios using an empirical copula and subsequently aggregating them through the hierarchy. Other studies \cite{jeon2019probabilistic,panagiotelis2023probabilistic,tsiourvas2024learning} explored both fixed projection matrices and learning-based projection approaches for reconciliation. However, these projection-based approaches rely on the assumption that the base forecasts are unbiased. Given the under-dispersion nature of NWP systems, the effectiveness of purely projection-based reconciliation methods is limited.

\begin{figure*}[ht]
    \centering
    \includegraphics[width=0.75\linewidth]{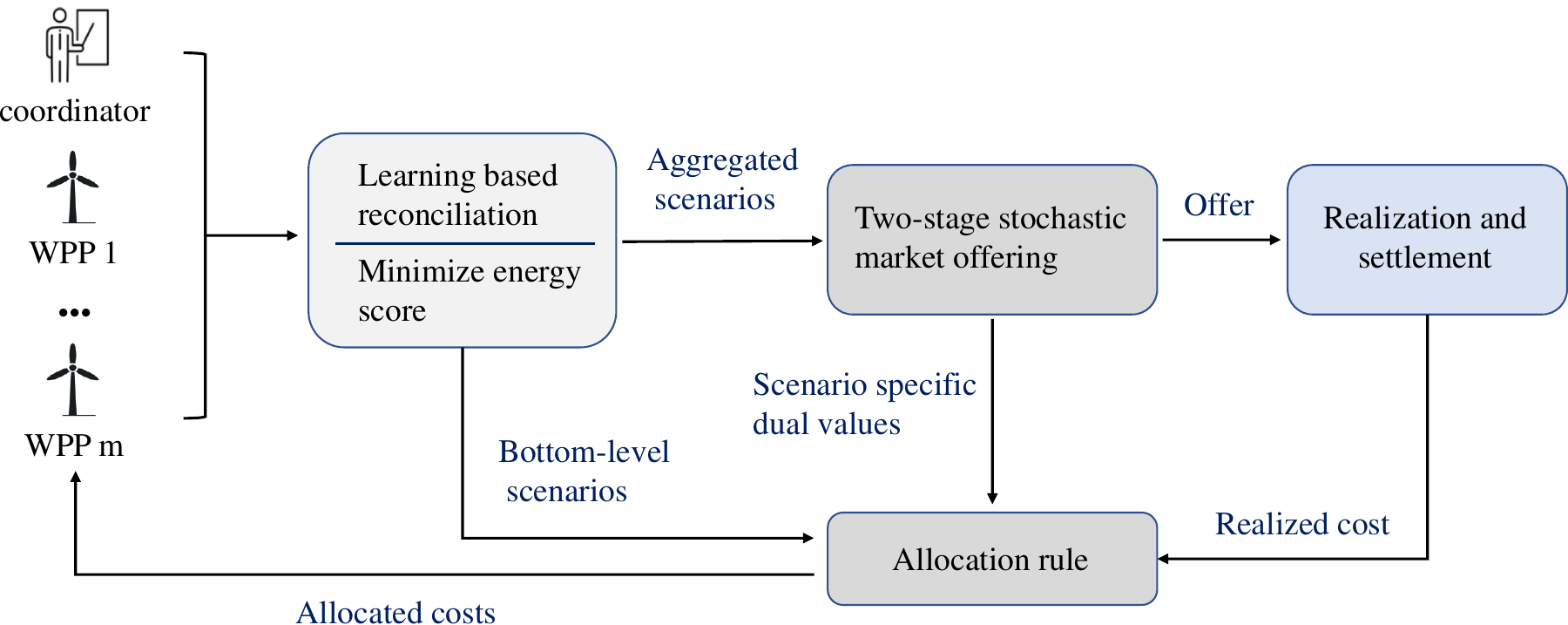}
    \caption{The proposed ``reconcile-then-optimize'' framework.}
    \label{fig:framework}
    \vspace{-1em}
\end{figure*}

In this work, we propose a ``reconcile-then-optimize'' paradigm, as illustrated in Fig.~\ref{fig:framework}: first align heterogeneous probabilistic forecasts into a coherent joint view; then optimize the market offer and allocate profits using that shared coherent information. Concretely, we develop a nonparametric, learning-based reconciliation model that maps each scenario to a coherent one. 
Unlike linear, projection-based reconciliation \cite{jeon2019probabilistic,panagiotelis2023probabilistic,tsiourvas2024learning}—which preserves distributional families and relies on unbiased bases—our nonparametric reconciler is a universal approximator. It can capture nonlinear cross-site dependencies and correct systematic bias/under-dispersion common in NWP-based forecasts. 
Using the reconciled scenarios, we solve a two-stage stochastic program for each trading hour to obtain the aggregate offer and the scenario-wise dual variables. A dual allocation mechanism, inspired by \cite{chen2009stochastic}, then produces an allocation that lies in the core (i.e., no subset of players could earn more by breaking away) and is budget-balanced, rendering cooperation both stable and implementable. In effect, the reconciled probabilistic forecasts jointly determine the market offer and the profit allocations. This procedure is computationally tractable and practically implementable, in contrast to Nucleolus-based methods. After the realization of actual generation, profits are distributed according to the shares specified by these allocations.

On a publicly available dataset (NYISO, day-ahead forecasts/observations) \cite{bryce2024solar}, the proposed method delivers better probabilistic forecast quality (e.g., lower energy score values and improved multivariate rank diagnostics) and, when embedded in the joint optimization, leads to higher realized profits across WPPs—highlighting that better forecasts translate into better decisions and profits.
The main contributions of this work are as follows: 
\begin{enumerate}
    \item We design a ``reconcile-then-optimize'' framework, placing emphasis on achieving coherence over shared probabilistic information prior to decision-making and profit allocation. By enforcing forecast coherence first, we eliminate informational disputes and can then ensure that the dual-based allocation from the joint stochastic program lies in the core and is budget-balanced, making the cooperation both credible and implementable. 
    \item We develop a nonparametric reconciliation method for probabilistic forecasts that possesses universal approximation capability, allowing it to capture nonlinear dependencies, correct bias, and improve calibration compared to linear projection methods.
\end{enumerate}
The remainder of this paper is organized as follows. Section \Romannum{2} introduces the proposed nonparametric probabilistic forecast reconciliation method and establishes its properties. Section \Romannum{3} presents the trading strategy and allocation mechanism. Section \Romannum{4} details the case study setup, including data descriptions, benchmarks, and evaluation metrics. Section \Romannum{5} reports the numerical results, and Section \Romannum{6} concludes the paper.

\section{Probabilistic Forecast Reconciliation}
Wind power producers (WPPs) can improve their market participation by cooperating in collective wind power offering \cite{bitar2011selling}. Much of the existing work, however, assumes that the distribution of wind generation is known. In practice, only probabilistic forecasts from heterogeneous sources are available. Consequently, reconciling these forecasts into a coherent representation is a prerequisite for effective collective decision-making. Moreover, because NWP-based probabilistic forecasts are often poorly calibrated \cite{bryce2024solar}, reconciliation is expected not only to enforce coherence but also to improve forecast quality \cite{zhang2024asset}.

In this study, the power generation of the WPP $i$ at time $t$ is modeled as a random variable $Y_{i,t}$, with realization $y_{i,t}$. We consider a set of WPPs $\mathcal{M}=\{1,2,\cdots,m\}$. The aggregate generation is represented by the random variable $Y_{\mathrm{sum},t} = \sum_{i=1}^m Y_{i,t}$, with its realization given by $y_{\mathrm{sum},t} = \sum_{i=1}^m y_{i,t}$.
For convenience, we denote $[y_{\mathrm{sum},t},y_{1,t},y_{2,t},\cdots,y_{m,t}]^\top$ compactly as $\mathbf{y}_t$, and $[y_{1,t},y_{2,t},\cdots,y_{m,t}]^\top$ as $\mathbf{b}_t$. Let the structure matrix $\mathbf{G}\in \mathbb{R}^{(m+1)\times m}$ be
\begin{equation*}
    \mathbf{G}=\begin{bmatrix}
        \mathbf{1}_m^\top\\
        \mathbf{I}_{m\times m}
    \end{bmatrix},
\end{equation*}
where $\mathbf{1}_m$ denotes an $m$-dimensional all-ones vector, and $\mathbf{I}_{m\times m}$ denotes the $m\times m$ identity matrix. Then, the relationship between the whole and the bottom-level time series can be expressed as
\begin{equation}
    \mathbf{y}_t = \mathbf{G}\mathbf{b}_t.
\end{equation}

Suppose we have probabilistic forecasts $\{\Tilde{y}_{i,t+k|t}^{(\xi)} \}_{\xi=1}^N$ for each series $i$ with lead time $k$, issued by heterogeneous agents. Particularly, in the context of day-ahead wind power forecasts, $t$ denotes the daily forecast issuance time, while the lead time $k$ spans the subsequent 24-hour horizon.\footnote{Typically, there is a gap between the forecast issuance time and midnight. For notational simplicity, however, we assume $t$ corresponds to midnight and let $k$ span from 1 to 24.} Accordingly, we express each scenario in compact form as
\begin{align*}
\Tilde{\mathbf{y}}_{t+k|t}^{(\xi)}&=\left[\Tilde{y}_{\mathrm{sum},t+k|t}^{(\xi)},\Tilde{y}_{1,t+k|t}^{(\xi)},\Tilde{y}_{2,t+k|t}^{(\xi)},\cdots,\Tilde{y}_{m,t+k|t}^{(\xi)} \right]^\top,\\
\Tilde{\mathbf{b}}_{t+k|t}^{(\xi)}&=\left[\Tilde{y}_{1,t+k|t}^{(\xi)},\Tilde{y}_{2,t+k|t}^{(\xi)},\cdots,\Tilde{y}_{m,t+k|t}^{(\xi)} \right]^\top.
\end{align*}
Then, we formally define the coherence of probabilistic forecasts.
\begin{definition}
[Coherence] A group of probabilistic forecasts $\{\Tilde{\mathbf{y}}_{t+k|t}^{(\xi)} \}_{\xi=1}^M$ is said to be coherent if $\Tilde{\mathbf{y}}_{t+k|t}^{(\xi)} = \mathbf{G}\Tilde{\mathbf{b}}_{t+k|t}^{(\xi)}, \forall \xi\in [N]$, where $[N]=\{ 1,2,\cdots,N\}$.
\end{definition}
\noindent This condition implies that the coherent probabilistic forecasts reside in the linear subspace of $\mathbb{R}^{m+1}$ spanned by the columns of $\mathbf{G}$.

\subsection{Learning-based Reconciliation Model}
Since the forecasts $\{\Tilde{y}_{i,t+k|t}^{(\xi)} \}_{\xi=1}^N$ are issued by heterogeneous agents, $\{\Tilde{\mathbf{y}}_{t+k|t}^{(\xi)} \}_{\xi=1}^N$ they may not be coherent by construction. Therefore, it is necessary to reconcile the forecasts to ensure coherence.
Indeed, achieving coherence between aggregate and individual forecasts is relatively straightforward due to their inherent two-level hierarchical structure. Basic reconciliation methods, such as the bottom-up approach— which aggregates individual forecasts scenario by scenario— can ensure coherence, but often at the expense of forecast quality. Inspired by \cite{panagiotelis2023probabilistic}, we propose a learning-based reconciliation approach. Specifically, letting $\hat{\mathbf{y}}_{t+k|t}^{(\xi)}$ denote the reconciled forecast corresponding to the base forecast $\Tilde{\mathbf{y}}_{t+k|t}^{(\xi)}$, we establish the reconciliation model $T_\mathrm{R}$ that operates on a scenario-by-scenario basis. It is expressed as
\begin{equation}
\hat{\mathbf{y}}_{t+k|t}^{(\xi)}=T_\mathrm{R}(\Tilde{\mathbf{y}}_{t+k|t}^{(\xi)};\theta),
\end{equation}
where $\theta$ denotes the parameters. More concretely, we model $T_\mathrm{R}$ as a combination of matrix multiplication and a learnable function $q$, defined as
\begin{align*}
    \mathbf{h}_{t+k}^{(\xi)}&=q(\Tilde{\mathbf{y}}_{t+k|t}^{(\xi)};\theta),\\
    \hat{\mathbf{y}}_{t+k|t}^{(\xi)}&= \mathbf{G}\mathbf{h}_{t+k}^{(\xi)},
\end{align*}
where $\mathbf{h}_{t+k}^{(\xi)}\in \mathbb{R}^m$ denotes an intermediate representation, which can be interpreted as the reconciled bottom-level forecast.

Existing works \cite{panagiotelis2023probabilistic,tsiourvas2024learning} model $q$ as a special linear function parameterized by a learnable projection matrix $\mathbf{Q}\in \mathbb{R}^{m\times (m+1)}$, i.e.,
\begin{equation*}
    \hat{\mathbf{y}}_{t+k|t}^{(\xi)}= \mathbf{G}\mathbf{Q}\Tilde{\mathbf{y}}_{t+k|t}^{(\xi)}.
\end{equation*}
In contrast, we model $q$ as a neural network, and establish that the resulting reconciliation operator $T_\mathrm{R}$ has the property of a universal approximator. To start with, we introduce the noise-outsourcing lemma in the view of normalizing flow \cite{papamakarios2021normalizing}.
\begin{lemma}
[Normalizing flow] Let $(X,Y)$ be a random pair taking values  in $\mathcal{X}\times \mathcal{Y}$ with joint distribution $F_{X,Y}$, where $\mathcal{Y}$ is assumed to be a standard Borel space. Then, there exits a random vector $\eta\sim \mathcal{N}(\mathbf{0},\mathbf{I}_d)$ where $d$ is the dimension of $Y$, and a Borel measurable and invertible function $T:\mathbb{R}^d\times \mathcal{X}\rightarrow \mathcal{Y}$ such that $\eta$ is independent of $X$ and
\begin{equation*}
    (X,Y)=(X,T(\eta,X))
\end{equation*}
almost surely.
\end{lemma}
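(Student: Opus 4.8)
The plan is to establish the representation through regular conditional distributions together with the (conditional) probability integral transform, i.e. the Rosenblatt transform. First I would exploit the hypothesis that $\mathcal{Y}$ is a standard Borel space: this is precisely what guarantees the existence of a regular conditional distribution, a probability kernel $\kappa(x,\cdot) = P(Y \in \cdot \mid X = x)$, which is the object that must be ``sampled'' from Gaussian noise. Since $d$ is the dimension of $Y$, I would fix a Borel isomorphism identifying $\mathcal{Y}$ with a Borel subset of $\mathbb{R}^d$, so that each $\kappa(x,\cdot)$ becomes a law on $\mathbb{R}^d$ and one-dimensional conditional CDFs are available coordinatewise.

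The core of the construction is to build $\eta$ directly from $(X,Y)$, so that the equality holds almost surely rather than merely in law. Writing $Y = (Y_1,\dots,Y_d)$, I would apply the forward Rosenblatt transform: set $U_1 = F_{1\mid X}(Y_1)$, $U_2 = F_{2\mid X,Y_1}(Y_2)$, and so on, where each $F_{j\mid\cdots}$ is the one-dimensional conditional CDF extracted from $\kappa(X,\cdot)$, and then put $\eta_j = \Phi^{-1}(U_j)$ with $\Phi$ the standard normal CDF. By the probability integral transform each $U_j$ is Uniform$[0,1]$ conditionally on the earlier coordinates and on $X$; the tower property then yields that $\eta = (\eta_1,\dots,\eta_d) \sim \mathcal{N}(\mathbf{0},\mathbf{I}_d)$ and is independent of $X$. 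The map $T$ is taken to be the inverse Knothe--Rosenblatt rearrangement --- coordinatewise $Y_1 = F_{1\mid X}^{-1}(\Phi(\eta_1))$, $Y_2 = F_{2\mid X,Y_1}^{-1}(\Phi(\eta_2))$, and so forth --- whose joint measurability in $(\eta,x)$ follows from measurability of the kernel and of the generalized inverses, and which by construction satisfies $Y = T(\eta,X)$ almost surely.

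The step I expect to be the main obstacle is \emph{invertibility}, together with the clean Uniform output of the forward transform; both hinge on the conditional laws being atomless. The plain noise-outsourcing conclusion --- a measurable $T$, not necessarily injective, with $\eta$ Gaussian and independent of $X$ --- holds for any standard Borel $\mathcal{Y}$, but the additional normalizing-flow requirement that $T(\cdot,x)$ be invertible is strictly stronger. If some $\kappa(x,\cdot)$ carries an atom, the conditional CDF jumps, $F_{j\mid\cdots}(Y_j)$ fails to be exactly uniform, and no deterministic bijection from a continuous Gaussian onto $\mathcal{Y}$ can exist. I would therefore either impose the standing assumption that the conditional distributions are absolutely continuous --- under which each conditional CDF is continuous and strictly increasing, making the triangular map strictly monotone in every coordinate and hence invertible with measurable inverse --- or restore generality via the randomized (distributional) probability integral transform, at the cost of reading invertibility in the almost-everywhere sense. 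I would close by noting that the triangular structure delivers both $T$ and its inverse explicitly through the conditional CDFs and $\Phi^{-1}$, completing the flow representation.
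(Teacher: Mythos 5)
Your construction is correct and, in substance, it is the proof that the paper outsources: the paper's own ``proof'' is a one-line citation to the probability-transformation formula in Section~5 of the normalizing-flows survey, and the autoregressive flows discussed there are built exactly from the triangular Knothe--Rosenblatt map you describe (conditional CDFs composed with $\Phi^{-1}$, coordinate by coordinate). So you are not taking a different route so much as actually walking the route the paper only points at. The one place where you go beyond the paper --- and correctly so --- is your observation that the invertibility of $T(\cdot,x)$ is \emph{not} delivered by the bare noise-outsourcing/transfer argument and genuinely requires the conditional laws $\kappa(x,\cdot)$ to be atomless (e.g.\ absolutely continuous): with an atom, the conditional CDF jumps, the forward Rosenblatt transform is no longer exactly uniform, and no deterministic bijection from Gaussian noise onto $\mathcal{Y}$ exists. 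The lemma as stated in the paper asserts invertibility for an arbitrary standard Borel $\mathcal{Y}$ without this hypothesis, so your proposed standing assumption of absolutely continuous conditionals is not a weakness of your proof but a repair of the statement; in the paper's application (wind power conditional densities) the assumption is harmless, but it should be made explicit.
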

\begin{proof}
This results directly from the general probability transformation formula, as described in Section 5 of \cite{papamakarios2021normalizing}.
\end{proof}
\noindent Then, each set of probabilistic forecasts $\{\Tilde{y}_{i,t+k|t}^{(\xi)} \}_{\xi=1}^M$ is induced from a generative model of the form
\begin{equation*}
    T_i(\eta_i,X_{i,t}),
\end{equation*}
where $\eta_i \sim \mathcal{N}(0,1)$ and $X_{i,t}$ denotes the contextual information (in this case, NWP features). Let $\mathbf{X}_t$ and $\mathbf{Y}_t$ denote the concatenations of $\{X_{i,t}\}_{i=1}^m$ and $\{Y_{i,t}\}_{i=1}^m$  respectively. We then present our results as follows.
\begin{proposition}
[Universal approximator] The nonparametric transform $T_\mathrm{R}$ is a universal approximator of the joint distribution $F_{\mathbf{Y}_{t+k}|\mathbf{X}_t}$.
\end{proposition}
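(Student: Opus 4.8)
The plan is to exhibit an \emph{ideal} reconciliation map that exactly reproduces the target conditional law and then appeal to the universal approximation theorem for neural networks to argue that $q$ can approximate it. First I would reduce the claim to the bottom level: since $\mathbf{G}$ has full column rank, $\mathbf{b}\mapsto\mathbf{G}\mathbf{b}$ is a bijection from $\mathbb{R}^m$ onto the coherent subspace, so approximating the law of $\mathbf{Y}_{t+k}\in\mathbb{R}^m$ is equivalent to approximating the law of its coherent lift $\mathbf{G}\mathbf{Y}_{t+k}$. It therefore suffices to show that $\mathbf{h}=q(\Tilde{\mathbf{y}}_{t+k|t};\theta)$ can match $F_{\mathbf{Y}_{t+k}|\mathbf{X}_t}$, after which the fixed linear map $\mathbf{G}$ carries the approximation over to the full $(m+1)$-dimensional coherent forecast.

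Next I would use the normalizing-flow lemma twice. Applying it to the pair $(\mathbf{X}_t,\mathbf{Y}_{t+k})$ yields a standard Gaussian $\eta\sim\mathcal{N}(\mathbf{0},\mathbf{I}_m)$ independent of $\mathbf{X}_t$ and an invertible Borel map $T^\ast$ with $\mathbf{Y}_{t+k}=T^\ast(\eta,\mathbf{X}_t)$ almost surely, so that $T^\ast(\cdot,\mathbf{X}_t)$ pushes the Gaussian forward onto the target law. Applying it at the individual level reproduces the assumed generative form $\Tilde{y}_{i,t+k|t}=T_i(\eta_i,X_{i,t})$ with each $T_i$ invertible in its noise argument; hence the base scenario $\Tilde{\mathbf{y}}_{t+k|t}$ is itself an invertible reparameterisation of standard Gaussian noise. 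Composing the inverse base map with $T^\ast$ defines an ideal reconciler $q^\ast$ whose pushforward of the base-forecast law equals $F_{\mathbf{Y}_{t+k}|\mathbf{X}_t}$; concretely, $q^\ast$ recovers the latent code from $\Tilde{\mathbf{y}}_{t+k|t}$ and feeds it through $T^\ast$. Finally, invoking the universal approximation theorem, I would choose $\theta$ so that $q(\cdot;\theta)$ approximates $q^\ast$ arbitrarily well, and then transfer this to closeness of distributions: continuity of the pushforward operation (and of the fixed map $\mathbf{G}$) implies that the law of $\hat{\mathbf{y}}_{t+k|t}=\mathbf{G}q(\Tilde{\mathbf{y}}_{t+k|t};\theta)$ converges to the coherent target law, which is precisely the universal-approximation claim.

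The main obstacle lies in the latent-compatibility step. The noise variables $\eta_i$ recovered from the heterogeneous base models are only guaranteed to be marginally standard normal; their joint law need not be the product Gaussian $\mathcal{N}(\mathbf{0},\mathbf{I}_m)$ that the target flow $T^\ast$ consumes, so I must insert a measurable Gaussianising transport (e.g.\ a Knothe--Rosenblatt rearrangement) into the definition of $q^\ast$ and argue that it can be absorbed by the network. A related subtlety is that the composed $q^\ast$ is a priori only Borel measurable, whereas the universal approximation theorem is stated for continuous targets; I would close this gap with Lusin's theorem, replacing $q^\ast$ by a continuous map agreeing with it off a set of arbitrarily small probability, which yields approximation in distribution rather than uniformly. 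I should also flag that $q$ takes only $\Tilde{\mathbf{y}}_{t+k|t}$ as input, so the argument implicitly requires that the context $\mathbf{X}_t$ be recoverable from (or appended to) the base scenarios; otherwise the $\mathbf{X}_t$-dependent transport cannot be realised by a function of $\Tilde{\mathbf{y}}_{t+k|t}$ alone, and this is the point I would scrutinise most carefully.
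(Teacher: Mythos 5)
Your proposal follows essentially the same route as the paper's own proof: view the composition of the heterogeneous base generators $T_i(\eta_i,X_{i,t})$ with the reconciler as a single noise-driven map, invoke the normalizing-flow (noise-outsourcing) lemma to assert the existence of an ideal transport $T^\ast$ realizing $F_{\mathbf{Y}_{t+k}\mid\mathbf{X}_t}$, and appeal to the universal approximation theorem to argue the neural network $q$ can realize it. The paper's version is considerably terser---it simply declares the composition ``equivalent to $T(\eta,\mathbf{X}_t)$'' and stops---so the three obstacles you flag (the recovered latents being only marginally, not jointly, standard Gaussian; the Borel-versus-continuous mismatch in the approximation theorem; and, most seriously, the fact that $T_\mathrm{R}$ sees only $\Tilde{\mathbf{y}}_{t+k|t}$ and so cannot implement an $\mathbf{X}_t$-dependent transport unless the context is recoverable from the base scenarios) are left entirely unaddressed there as well; your last point in particular is a real gap in the published argument, not just in your reconstruction of it.
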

\begin{proof}
The scenarios $\{T_\mathrm{R}(\Tilde{\mathbf{y}}_{t+k|t}^{(\xi)}) \}_{\xi=1}^N$ are induced from the model
\begin{equation*}
    T_\mathrm{R}(T_1(\eta_1,X_{1,t}),T_2(\eta_2,X_{2,t})\cdots,T_m(\eta_m,X_{m,t})),
\end{equation*}
which is equivalent to 
\begin{equation*}
    T(\eta,\mathbf{X}_t),
\end{equation*}
where $\eta \sim \mathcal{N}(\mathbf{0},\mathbf{I}_m)$ and $T$ is an invertible function, as justified by the universal approximation theorem for neural networks. By the noise-outsourcing lemma, $(\mathbf{X}_t,T(\eta,\mathbf{X}_t))$ is equal to $(\mathbf{X}_t,\mathbf{Y}_{t+k})$ almost surely. Consequently, $T_\mathrm{R}$ serves as a universal approximator of the conditional distribution $F_{\mathbf{Y}_{t+k}|\mathbf{X}_t}$.
\end{proof}

Furthermore, the reconciliation model $T_\mathrm{R}$ aggregates information across the hierarchy, thereby enhancing forecast quality. In contrast to our reconciliation method, the projection-based approach \cite{panagiotelis2023probabilistic,tsiourvas2024learning} is not a universal approximator, as it restricts the forecasts to remain within the same family of distributions. Especially when the marginal distributions are misspecified, the projection reconciliation approach must lead to biased joint distribution estimates. 

\subsection{Parameter Estimation}
To estimate the parameters $\theta$ of the reconciliation model $T_\mathrm{R}$, we adopt \emph{distribution matching}. Because only realizations $\mathbf{y}_t$ rather than the full distribution $F_{\mathbf{Y}_t}$, are observed, we implement distribution matching using the energy score.
Given a predictive distribution represented by scenarios $\{\hat{\mathbf{y}}_{t+k|t}^{(\xi)} \}_{\xi=1}^N$, the energy score can be expressed in scenario form. Specifically, we generate two independent random permutations of the indices $1,2,\cdots,N$, denoted as $r_1,r_2,\cdots,r_N$ and $s_1,s_2,\cdots,s_N$. The energy score is then defined as
\begin{equation}
\begin{aligned}
    \operatorname{ES}&\left(\hat{F}_{\mathbf{Y}_{t+k|t}},\mathbf{y}_{t+k}\right)=\frac{1}{N}\sum_{i=1}^N||\hat{\mathbf{y}}_{t+k|t}^{(r_i)}-\mathbf{y}_{t+k}||_2 \\
    &-\frac{1}{2N^2}\sum_{i=1}^N\sum_{j=1}^N||\hat{\mathbf{y}}_{t+k|t}^{(r_i)}-\hat{\mathbf{y}}_{t+k|t}^{(s_j)}||_2.
\end{aligned}
\end{equation}
where $||\cdot||_2$ is the Euclidean norm.
As shown in \cite{gneiting2007strictly}, the energy score is a strictly proper scoring rule, meaning that its expected value is uniquely minimized when the predictive distribution coincides with the true distribution. 
Suppose we have access to a training set
\begin{equation*}
\Big\{ \big( \{\tilde{\mathbf{y}}_{t+k|t}^{(\xi)} \}_{\xi=1}^N, \mathbf{y}_{t+k} \big)\big| t \in \mathcal{T}_\mathrm{tr}, k = 1,2,\ldots,24 \Big\},
\end{equation*}
where $\mathcal{
T
}_\mathrm{tr} $ denotes the set of forecast issuance times used for training. Particularly, we employ the same reconciliation model $T_\mathrm{R}$ across all lead time $k$. The parameters $\theta$ are then estimated by minimizing the average energy score, 
\begin{equation}
    \theta^*=\arg \min_\theta \ \frac{1}{24\cdot|\mathcal{T}_\mathrm{tr}|}\sum_{k=1}^{24}\sum_{t\in \mathcal{T}_\mathrm{tr}}\operatorname{ES}\left(\hat{F}_{\mathbf{Y}_{t+k|t}},\mathbf{y}_{t+k}\right).
\end{equation}

\section{Trading Strategy and Allocation Mechanism}

In particular, we consider two trading floors: the day-ahead (forward) market and the real-time (balancing) market, the latter operating under a dual-price settlement mechanism. Throughout, we assume that WPPs act as price takers and are risk-neutral. At the day-ahead stage, each power producer is required to submit market offers at the closure time $t$, covering the 24 hours of the following day. For example, the offer of WPP $i$ submitted for trading hour $t+k$ is denoted by $y_{i,t+k}^\mathrm{F}$. Since power systems must balance supply and demand in real time, deviations between the offer $y_{i,t+k}^\mathrm{F}$ and actual generation $y_{i,t+k}$ are settled in the balancing stage before delivery. 

\subsection{Trading Strategy of Single WPP}

Under the dual-price settlement mechanism, let $\pi_{t+k}^\mathrm{UP}$ and $\pi_{t+k}^\mathrm{DW}$ denote the upward and downward regulation prices for trading hours $t+k$, and let $\pi_{t+k}^\mathrm{F}$ denote the day-ahead price. By the design of the mechanism, the relation $\pi_{t+k}^\mathrm{UP}\geq\pi_{t+k}^\mathrm{F}\geq\pi_{t+k}^\mathrm{DW}$ holds. 
Following common practice in the renewable energy offering literature \cite{morales2013integrating},  the profit of the WPP $i$ at trading hour $t+k$ can be written in terms of deviation penalties as
\begin{equation}
\begin{aligned}
\rho_{i,t+k}^{(\mathrm{I})}=\pi_{t+k}^\mathrm{F} y_{i,t+k} &- 
\biggl[ \psi_{t+k}^-  [y_{i,t+k}^\mathrm{F} - y_{i,t+k}]^+ \\
&+ \psi_{t+k}^+ [y_{i,t+k} - y_{i,t+k}^\mathrm{F}]^+ \biggr],
\end{aligned}
\label{independent_profit}
\end{equation}
where $[x]^+=\max(x,0)$, $\psi_{t+k}^+=\pi_{t+k}^\mathrm{F}-\pi_{t+k}^\mathrm{DW}$, and $\psi_{t+k}^-=\pi_{t+k}^\mathrm{UP}-\pi_{t+k}^\mathrm{F}$. For convenience, we define the second term of the RHS as the balancing cost, denoted by $c_{t+k}(y_{i,t+k}^\mathrm{F}, y_{i,t+k})$. Since the first term of (\ref{independent_profit}) is beyond the producer's control, the optimal offer is obtained by minimizing the expected balancing cost.

If the distribution of $Y_{i,t+k}$ is known, the optimal offer is obtained as
\begin{equation}
y_{i,t+k}^{\texttt{F}*}=\underset{y_{i,t+k}^\texttt{F}\in[0,P_{i,\mathrm{rated}}]}{\arg\min}\ \mathbb{E}_{Y_{i,t+k}}\left[ c_{t+k}(y_{i,t+k}^\texttt{F},y_{i,t+k})\right],
\label{exp_loss_min}
\end{equation}
where $P_{i,\mathrm{rated}}$ denotes the rated capacity of the WPP $i$. In particular, \cite{pinson2007trading} demonstrated that problem (\ref{exp_loss_min}) can be cast as a newsvendor formulation, yielding an analytical solution expressed in terms of quantiles:
\begin{equation}
y_{i,t+k}^{\texttt{F}*}=F^{-1}_{y_{i,t+k}}\left(\frac{\psi_{t+k}^+}{\psi_{t+k}^++\psi_{t+k}^-}\right),
\label{quantile}
\end{equation}
where $F^{-1}(\cdot)$ represents the inverse CDF. On the other hand, when a set of $N$ probabilistic scenarios $\{ y_{i,t+k}^{(\xi)} \}_{\xi=1}^N$ is available, the offer can alternatively be determined through a stochastic programming approach \cite{morales2010short}.

\subsection{Cooperative Wind Power Offering}
If the wind power generation assets are owned by a single stakeholder, the optimal aggregate offer can be directly determined as the quantile of $Y_{\mathrm{sum},t}$, analogous to (\ref{quantile}). However, when the assets belong to different stakeholders, it becomes necessary to allocate the resulting profits among them.
With the reconciled forecasts $\{\hat{\mathbf{y}}_{t+k|t}^{(\xi)} \}_{\xi=1}^N$ at hand, we now describe how the collective of WPPs participates in the electricity markets and allocates the resulting costs. For the grand coalition $\mathcal{M}$, let $y_{\mathrm{sum},t+k}^\mathrm{F}$ denote the aggregate offering quantity. Then the corresponding two-stage stochastic programming problem can be formulated as follows:
\begin{equation}
\begin{aligned}
\min_{y_{\mathrm{sum},t+k}^\mathrm{F}} \quad & \frac{1}{N}\sum_{\xi=1}^{N} (\psi_{t+k}^+ u^{(\xi)}_{\mathrm{sum},t+k}+\psi_{t+k}^- w^{(\xi)}_{\mathrm{sum},t+k}) \\
\mathrm{s.t.} \quad 
& y_{\mathrm{sum},t+k}^\mathrm{F} + u^{(\xi)}_{\mathrm{sum},t+k}-w^{(\xi)}_{\mathrm{sum},t+k} \\
& \quad=\hat{y}_{\mathrm{sum},t+k|t}^{(\xi)}, \quad \forall \xi \in [N] \quad &(\nu_{t+k}^{(\xi)})\\
& u^{(\xi)}_{\mathrm{sum},t+k} \geq 0, \, w^{(\xi)}_{\mathrm{sum},t+k} \geq 0, \quad \forall \xi \in [N] \\
& 0 \leq y_{\mathrm{sum},t+k}^\mathrm{F} \leq P_{\mathrm{sum},\mathrm{rated}},
\end{aligned}
\label{P}
\end{equation}
where $u^{(\xi)}_{\mathrm{sum},t+k}$ and $w^{(\xi)}_{\mathrm{sum},t+k}$ denote the aggregated over- and under-production quantities in scenario $\xi$, $P_{\mathrm{sum},\mathrm{rated}}=\sum_{i\in \mathcal{M}}P_{i,\mathrm{rated}}$ denotes the total rated capacity, and $\nu^{(\xi)}_{t+k}$ represents the corresponding dual variable. We can further define a value function  for the expected cost $v:\mathbb{R}^M\rightarrow \mathbb{R}$, expressed as 
\begin{equation*}
l_{t+k}(\mathcal{M})=v(\hat{y}_{\mathrm{sum},t+k|t}^{(1)},\hat{y}_{\mathrm{sum},t+k|t}^{(2)},\cdots,\hat{y}_{\mathrm{sum},t+k|t}^{(N)}).
\end{equation*} Accordingly, the dual problem of (\ref{P}) can be derived as
\begin{equation}
\begin{aligned}
\max_{\{\nu_{t+k}^{(\xi)}\}} \quad & \frac{1}{N}\sum_{\xi =1}^N \nu^{(\xi)}_{t+k}\hat{y}_{\mathrm{sum},t+k}^{(\xi)} \\
\mathrm{s.t.} \quad 
& \frac{1}{N}\sum_{\xi =1}^N\nu_{t+k}^{(\xi)}\leq 0,\\
& -\psi^-_{t+k} \leq \nu_{t+k}^{(\xi)} \leq \psi_{t+k}^+, \quad \forall \xi \in [N]
\end{aligned}
\label{D}
\end{equation}

Both problems (\ref{P}) and (\ref{D}) are linear and can therefore be solved efficiently. Solving them yields $y_{\mathrm{sum},t+k}^{\mathrm{F}*}$, the aggregate offer submitted to the market, together with $\{\nu_{t+k}^{(\xi)*} \}_{\xi=1}^N$, the dual variables associated with each scenario. We then define the allocated expected cost $a_{i,t+k}$ for WPP $i$ at trading hour $t+k$, as
\begin{equation}
    a_{i,t+k}=\frac{1}{N}\sum_{\xi=1}^N \hat{y}_{i,t+k}^{(\xi)}\nu_{t+k}^{(\xi)*}.
\end{equation}
Building upon the result of \cite{chen2009stochastic}, we now present an allocation that lies within the core.
\begin{proposition}
[Core allocation] Given the reconciled forecasts $\{ \hat{\mathbf{y}}_{t+k|t}^{(\xi)} \}_{\xi=1}^N$, the vector $\mathbf{a}_{t+k}=\left[a_{1,t+k},\cdots,a_{m,t+k} \right]^\top$ constitutes an allocation in the core of the wind power offering game $(\mathcal{M},l)$.
\end{proposition}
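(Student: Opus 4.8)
The plan is to verify the two defining properties of a core allocation for the cost game $(\mathcal{M},l)$: budget balance (efficiency), $\sum_{i\in\mathcal{M}} a_{i,t+k} = l_{t+k}(\mathcal{M})$, and coalitional stability, $\sum_{i\in\mathcal{S}} a_{i,t+k} \leq l_{t+k}(\mathcal{S})$ for every $\mathcal{S}\subset\mathcal{M}$. The single structural fact that drives both is that the dual feasible region in (\ref{D}) depends only on the market parameters $\psi_{t+k}^+,\psi_{t+k}^-$ and on $N$, and \emph{not} on the identity of the producers in the coalition; only the dual objective changes across coalitions, through the coalition's aggregate forecast.

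First I would set up the coalitional value precisely. For $\mathcal{S}\subset\mathcal{M}$, its scenario-$\xi$ aggregate forecast is the bottom-up sum $\hat{y}_{\mathcal{S},t+k}^{(\xi)} := \sum_{i\in\mathcal{S}}\hat{y}_{i,t+k}^{(\xi)}$, and $l_{t+k}(\mathcal{S})$ is the optimal value of the two-stage program (\ref{P}) with right-hand side $\hat{y}_{\mathcal{S},t+k}^{(\xi)}$ and capacity $P_{\mathcal{S},\mathrm{rated}}=\sum_{i\in\mathcal{S}}P_{i,\mathrm{rated}}$. Deriving its dual exactly as for (\ref{P}), I obtain the same feasible set as in (\ref{D}) with objective $\frac{1}{N}\sum_\xi \nu_{t+k}^{(\xi)}\hat{y}_{\mathcal{S},t+k}^{(\xi)}$. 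Consequently, the grand-coalition optimizer $\{\nu_{t+k}^{(\xi)*}\}$, being feasible for (\ref{D}), is automatically feasible for the dual of \emph{every} sub-coalition $\mathcal{S}$.

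For efficiency I would sum the allocations and exchange the order of summation,
\begin{equation*}
\sum_{i\in\mathcal{M}} a_{i,t+k}
= \frac{1}{N}\sum_{\xi=1}^N \nu_{t+k}^{(\xi)*}\sum_{i\in\mathcal{M}}\hat{y}_{i,t+k}^{(\xi)}
= \frac{1}{N}\sum_{\xi=1}^N \nu_{t+k}^{(\xi)*}\hat{y}_{\mathrm{sum},t+k}^{(\xi)},
\end{equation*}
where the last step invokes coherence, $\sum_{i\in\mathcal{M}}\hat{y}_{i,t+k}^{(\xi)}=\hat{y}_{\mathrm{sum},t+k}^{(\xi)}$. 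The right-hand side is exactly the optimal value of (\ref{D}), which by strong LP duality equals the optimal value of (\ref{P}), namely $l_{t+k}(\mathcal{M})$. For stability I would take any $\mathcal{S}\subset\mathcal{M}$ and write $\sum_{i\in\mathcal{S}} a_{i,t+k} = \frac{1}{N}\sum_\xi \nu_{t+k}^{(\xi)*}\hat{y}_{\mathcal{S},t+k}^{(\xi)}$, which is precisely the coalition's dual objective evaluated at the feasible point $\{\nu_{t+k}^{(\xi)*}\}$. Weak LP duality for $\mathcal{S}$'s minimization problem then yields this quantity $\leq l_{t+k}(\mathcal{S})$, which is the desired stability inequality.

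The delicate point—and the step I expect to be the main obstacle—is the capacity upper bound $0\leq y_{\mathrm{sum},t+k}^\mathrm{F}\leq P_{\mathrm{sum},\mathrm{rated}}$ and its coalitional analogues. Strictly, this bound introduces an extra dual multiplier and a term of the form $-P_{\mathcal{S},\mathrm{rated}}$ in the dual objective, which would make the dual feasible set depend on $\mathcal{S}$ and thereby break the "same feasible region" argument above. The clean proof therefore rests on this capacity constraint being non-binding at the optimum, consistent with the optimal offer being the interior quantile characterized in (\ref{quantile}) whenever all scenarios lie in $[0,P_{\cdot,\mathrm{rated}}]$, so that its multiplier vanishes and the dual collapses to the coalition-independent form (\ref{D}). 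I would make this non-binding property explicit (or argue it from the newsvendor/quantile structure) before applying the weak- and strong-duality arguments.
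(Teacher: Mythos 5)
Your proof follows essentially the same route as the paper's: strong duality for the grand coalition yields efficiency, and since the dual feasible region in (\ref{D}) is coalition-independent, the grand-coalition dual optimizer $\{\nu_{t+k}^{(\xi)*}\}$ is feasible for every sub-coalition's dual, so weak duality yields the stability inequality. Your closing remark about the capacity bound is a genuine (and worthwhile) refinement rather than a divergence: the paper's dual (\ref{D}) silently omits the multiplier for $y_{\mathrm{sum},t+k}^{\mathrm{F}}\leq P_{\mathrm{sum},\mathrm{rated}}$, which is only legitimate because that constraint is non-binding whenever the scenarios lie within the coalition's capacity --- a point the paper's proof does not make explicit.
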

\begin{proof}
By the strong duality for the grand coalition $\mathcal{M}$, we have
\begin{equation}
\begin{aligned}
    l_{t+k}(\mathcal{M})&=\frac{1}{N}\sum_{\xi =1}^N \nu^{(\xi)*}_{t+k}\hat{y}_{\mathrm{sum},t+k}^{(\xi)}\\
    &=\sum_{i\in \mathcal{M}}\left(\frac{1}{N}\sum_{\xi =1}^N \nu^{(\xi)*}_{t+k}\hat{y}_{i,t+k}^{(\xi)}\right)=\sum_{i\in \mathcal{M}}a_{i,t+k},
\end{aligned}
\label{efficiency}
\end{equation}
which means the allocation is \emph{efficient}. For any sub-coalition $\mathcal{S}$, its decision-making is based on the information set $\left \{ \{\hat{y}_{i,t+k|t}^{(\xi)} \}_{\xi=1}^N | i\in \mathcal{S} \right\}$. In this context, we define the aggregated probabilistic scenarios for a coalition $\mathcal{S}$ as $\sum_{i \in \mathcal{S}}\hat{y}_{i,t+k}^{(\xi)}$, and the aggregate offering quantity as $y_{\mathcal{S},t+k}^{\mathrm{F}*}$. Accordingly, the characteristic function of the wind power offering game is given by $l_{t+k}(\mathcal{S})$. By forming a counterpart problem of (\ref{D}) for the coalition $\mathcal{S}$, the constraints are of the same type, written only for $\mathcal{S}$. Thus, $\{\nu^{(\xi)*}_{t+k}\}_{\xi=1}^N$ is still feasible for this counterpart problem. By the weak duality for the coalition $\mathcal{S}$, we have
\begin{align}
    \sum_{i\in \mathcal{S}} a_{i,t+k}=\sum_{i\in \mathcal{S}}\left(\frac{1}{N}\sum_{\xi =1}^N \nu^{(\xi)*}_{t+k}\hat{y}_{i,t+k}^{(\xi)}\right)\leq l_{t+k}(\mathcal{S}).
    \label{rationality}
\end{align}
It means that every coalition $\mathcal{S}$ receives at most its stand-alone value. The conditions (\ref{efficiency}) and (\ref{rationality}) collectively satisfy the definition of the core.
\end{proof}


We note that the above result relies on the coherence of the forecasts $\{ \hat{\mathbf{y}}_{t+k|t}^{(\xi)} \}_{\xi=1}^N$. However, the allocation derived above reflects only the expected shared cost for each WPP. The realized imbalance cost becomes known only after actual generation is observed. This distinction motivates the development of ex-post profit allocation mechanisms that account for realized outcomes. Given the aggregate offer $y^{\mathrm{F}*}_{\mathrm{sum},t+k}$ and realized generation $y_{\mathrm{sum},t+k}$, the real aggregate imbalance cost is $c_{t+k}(y^{\mathrm{F}*}_{\mathrm{sum},t+k},y_{\mathrm{sum},t+k})$, which quantifies the penalty due to deviations between the aggregate offer and actual generation. We then define the realized cost allocation $c_{i,t+k}$ for each WPP $i$, derived from the allocation vector $\mathbf{a}_{t+k}$ as
\begin{equation}
    c_{i,t+k}=\frac{a_{i,t+k}}{\sum_{i=1}^ma_{i,t+k}} \, c_{t+k}(y^{\mathrm{F}*}_{\mathrm{sum},t+k},y_{\mathrm{sum},t+k}).
\end{equation}
Accordingly, the profit $\rho_{i,t+k}^{(\mathrm{C})}$ for the WPP $i$ under the ``reconcile-then-optimize'' framework is derived as
\begin{equation}
\rho_{i,t+k}^{(\mathrm{C})}=\pi_{t+k}^\mathrm{F}y_{i,t+k}-c_{i,t+k}.
\label{cooperative_profit}
\end{equation}

\begin{proposition}
[Ex-post allocation] The ex-post allocation is budget-balanced and stable with respect to the expected cost.    
\end{proposition}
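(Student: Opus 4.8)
The plan is to verify the two asserted properties in turn, relying on the proportional structure of $c_{i,t+k}$ and on the core property of $\mathbf{a}_{t+k}$ established in the preceding proposition. \emph{Budget-balance} is immediate: summing the realized allocation over all producers gives
\begin{equation*}
\sum_{i=1}^m c_{i,t+k} = \left(\sum_{i=1}^m \frac{a_{i,t+k}}{\sum_{j=1}^m a_{j,t+k}}\right) c_{t+k}(y^{\mathrm{F}*}_{\mathrm{sum},t+k}, y_{\mathrm{sum},t+k}) = c_{t+k}(y^{\mathrm{F}*}_{\mathrm{sum},t+k}, y_{\mathrm{sum},t+k}),
\end{equation*}
since the proportional weights sum to one. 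Thus the allocated realized costs exactly exhaust the realized aggregate imbalance cost, with neither surplus nor deficit.

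For \emph{stability with respect to the expected cost}, I would show that taking the expectation of $c_{i,t+k}$ over the realized generation recovers the expected-cost allocation $a_{i,t+k}$. The proportional weight is fixed at the day-ahead stage and is therefore deterministic, so
\begin{equation*}
\mathbb{E}[c_{i,t+k}] = \frac{a_{i,t+k}}{\sum_{j=1}^m a_{j,t+k}}\,\mathbb{E}\!\left[c_{t+k}(y^{\mathrm{F}*}_{\mathrm{sum},t+k}, Y_{\mathrm{sum},t+k})\right].
\end{equation*}
The crux is to identify the expected aggregate imbalance cost with the value function $l_{t+k}(\mathcal{M})$. This follows by noting that at the optimum of problem (\ref{P}) the slack variables $u^{(\xi)}_{\mathrm{sum},t+k}, w^{(\xi)}_{\mathrm{sum},t+k}$ realize exactly the positive and negative deviations, so the optimal objective equals the empirical average $\frac{1}{N}\sum_{\xi=1}^N c_{t+k}(y^{\mathrm{F}*}_{\mathrm{sum},t+k}, \hat{y}^{(\xi)}_{\mathrm{sum},t+k})$; treating the reconciled scenarios as representative of the true law of $Y_{\mathrm{sum},t+k}$ then yields $\mathbb{E}[c_{t+k}(y^{\mathrm{F}*}_{\mathrm{sum},t+k}, Y_{\mathrm{sum},t+k})] = l_{t+k}(\mathcal{M})$.

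Substituting the efficiency identity (\ref{efficiency}), namely $l_{t+k}(\mathcal{M}) = \sum_{j=1}^m a_{j,t+k}$, the normalizing factor cancels and gives $\mathbb{E}[c_{i,t+k}] = a_{i,t+k}$. Hence the expected realized allocation coincides with $\mathbf{a}_{t+k}$, which the Core-allocation proposition already places in the core. Consequently, for every sub-coalition $\mathcal{S}$ the inequality $\sum_{i\in\mathcal{S}} \mathbb{E}[c_{i,t+k}] \le l_{t+k}(\mathcal{S})$ holds, i.e., no coalition can secure a strictly smaller expected cost by breaking away, which is exactly stability in expectation.

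The main obstacle I anticipate is the identification $\mathbb{E}[c_{t+k}(y^{\mathrm{F}*}_{\mathrm{sum},t+k}, Y_{\mathrm{sum},t+k})] = l_{t+k}(\mathcal{M})$. This step tacitly assumes that the law governing the realized generation coincides with the empirical distribution of the reconciled scenario ensemble used to solve (\ref{P}). This is precisely where the coherence and calibration of the reconciled forecasts matter: if the scenario distribution is misspecified relative to the true realized distribution, the equality holds only asymptotically (as $N\to\infty$ under i.i.d. sampling), and $\mathbb{E}[c_{i,t+k}]$ would then agree with $a_{i,t+k}$ only up to that approximation error. I would therefore phrase the stability claim as holding when the reconciled ensemble is adopted as the reference measure, so that budget-balance is exact pathwise and stability is exact in expectation under that measure.
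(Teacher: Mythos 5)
Your proposal is correct and follows essentially the same route as the paper: budget-balance from the proportional weights summing to one, the identity $\mathbb{E}[c_{i,t+k}]=a_{i,t+k}$ via the efficiency relation $l_{t+k}(\mathcal{M})=\sum_j a_{j,t+k}$, and stability inherited from the core inequality of the preceding proposition. Your explicit caveat that the identification $\mathbb{E}[c_{t+k}(y^{\mathrm{F}*}_{\mathrm{sum},t+k},y_{\mathrm{sum},t+k})]=l_{t+k}(\mathcal{M})$ presumes the reconciled ensemble is the reference measure is a point the paper asserts without comment, so your version is, if anything, slightly more careful.
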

\begin{proof}
It is apparent that the total realized cost for every WPP matches the realized cost derived from the market, i.e.,
\begin{equation}
    \sum_{i\in \mathcal{M}}c_{i,t+k}=c_{t+k}(y^{\mathrm{F}*}_{\mathrm{sum},t+k},y_{\mathrm{sum},t+k}).
\end{equation}
In addition, we have $\mathbb{E}\left[c_{i,t+k} \right]=a_{i,t+k}$, since 
\begin{equation*}
    \mathbb{E}\left[ c_{t+k}\left(y^{\mathrm{F}*}_{\mathrm{sum},t+k},y_{\mathrm{sum},t+k}\right)\right]=l_{t+k}(\mathcal{M}).
\end{equation*}
It then follows that
\begin{equation}
    \sum_{i\in \mathcal{S}}\mathbb{E}\left[c_{i,t+k} \right]\leq \mathbb{E}\left[ c_{t+k}\left(y_{\mathcal{S},t+k}^{\mathrm{F}*},\sum_{i\in \mathcal{S}}y_{i,t+k}\right)\right],
\end{equation}
which means the allocation is stable from the perspective of expected cost. 
\end{proof}

\begin{figure*}[ht!]
    \centering
    \includegraphics[width=0.78\linewidth]{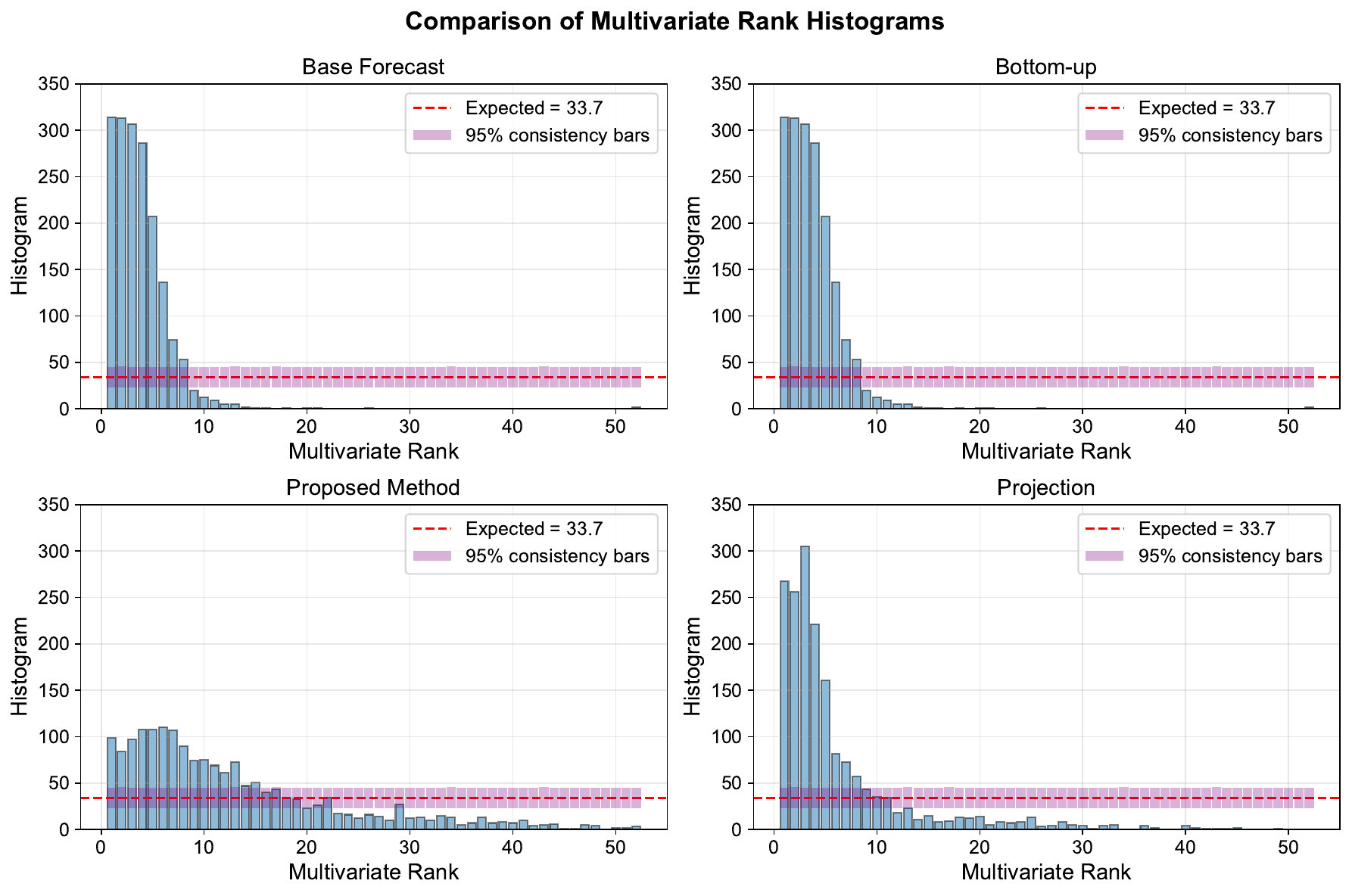}
    \caption{Multivariate rank histograms for different methods (consistency bars were obtained through simulation of perfectly calibrated forecasts \cite{brocker2007increasing}).}
    \label{fig:Histograms}
    \vspace{-1em}
\end{figure*}

\section{Case study}

\subsection{Setups and Benchmarks}

We utilize the open dataset released by the National Renewable Energy Laboratory (NREL) \cite{bryce2024solar}. The site-level day-ahead wind power forecasts are generated using weather forecast data from the European Centre for Medium-Range Weather Forecasts (ECMWF), consisting of 51 ensemble members. Zonal-level forecasts are subsequently obtained through aggregation and Bayesian model averaging. The `actual' wind power generation data are simulated using a physical model and an associated database.
Specifically, we select the day-ahead wind power forecasts and corresponding observations for the North zone within the NYISO balancing area for the year 2018. The dataset comprises 8 wind farms distributed across two geographic clusters, with their capacities summarized in Table~\ref{tab:capacity}. 
We use the first 80\% of the data for model training and reserve the remaining 20\% for out-of-sample validation.

\begin{table}[htbp]
\vspace{-1em}
\centering
\caption{Wind farm capacities (MW)}
\begin{tabular}{ll|ll}
\toprule
Wind Farm & Capacity & Wind Farm & Capacity \\
\midrule
Marble\_River & 215.25 & Noble\_Chateaugay & 106.5 \\
Noble\_Clinton & 100.5 & Jericho\_Rise & 77.7 \\
Noble\_Ellenburg & 81 & Bull\_Run\_II\_Wind & 145.4 \\
Noble\_Altona & 97.5 & Bull\_Run\_Wind & 303.6 \\
\midrule
\bottomrule
\end{tabular}
\label{tab:capacity}
\vspace{-1em}
\end{table}

In this work, we investigate a simplified setting in which the day-ahead price and balancing penalties are assumed to be fixed. This setup reflects certain specific instances observed in electricity markets. Particularly, we consider 
$\pi^\mathrm{F}=25$ \$/MWh, $\psi^+=4$ \$/MWh, $\psi^-=12$ \$/MWh. Evidently, the optimal offer corresponds to 
the 0.25 quantile in this case. We evaluate several benchmarks, including independent offering and projection-based counterparts, which are detailed as follows.

\subsubsection{Independent offering} Each WPP trades independently, using the corresponding quantiles derived from its own forecasts.
\subsubsection{Bottom-up reconciliation} All WPPs naively pool their forecasts and aggregate them through the hierarchy on a scenario-by-scenario basis. The resulting reconciled forecasts are then used within the cooperative wind power offering to determine the joint offer and corresponding allocations.
\subsubsection{Projection-based reconciliation \cite{jeon2019probabilistic,panagiotelis2023probabilistic,tsiourvas2024learning}} All WPPs and the coordinator pool their forecasts using a linear projection-based reconciliation model, with parameters optimized by minimizing the energy score. The reconciled forecasts are subsequently fed into the cooperative wind power offering framework.

\vspace{-1em}
\subsection{Evaluation Metrics}
We evaluate both the quality of reconciled forecasts and the resulting profits of each trading strategy on an out-of-sample dataset, $\Big\{ \big( \{\tilde{\mathbf{y}}_{t+k|t}^{(\xi)} \}_{\xi=1}^M, \mathbf{y}_{t+k} \big)\big| t \in \mathcal{T}_\mathrm{te}, k = 1,2,\ldots,24 \Big\}$,
where $\mathcal{T}_\mathrm{te} $ denotes the set of forecast issuance times used for training. In particular, we also evaluate the calibration of the reconciled forecasts, as this property is crucial for the quality of subsequent decisions. The evaluation metrics are detailed as follows.
\subsubsection{Energy Score}
We report the average energy score (AES) across the test set, i.e.,
\begin{equation*}
    \mathrm{AES}=\frac{1}{24 |\mathcal{T}_\mathrm{te}|}\sum_{k=1}^{24}\sum_{t\in \mathcal{T}_\mathrm{te}}\operatorname{ES}\left(\hat{F}_{\mathbf{Y}_{t+k|t}},\mathbf{y}_{t+k}\right).
\end{equation*}
\subsubsection{Calibration}
Calibration quantifies the statistical consistency between a probabilistic forecast and the corresponding realized observation. We employ the band-depth rank histogram proposed by \cite{thorarinsdottir2016assessing}. Specifically, we compute the band depth of the observation with respect to the set of scenarios, as well as the band depth of each ensemble member. The multivariate verification rank of the observation is then defined as its position within the ordered set of all computed band depths. Further methodological details can be found in \cite{thorarinsdottir2016assessing}.
\subsubsection{Realized Profits}
We report the average profit (AP) of each WPP across the test set, defined as
\begin{equation*}
\mathrm{AP}_i =
\frac{1}{24 |\mathcal{T}_\mathrm{te}|}
\sum_{k=1}^{24} \sum_{t \in \mathcal{T}_\mathrm{te}}
\rho_{i,t+k}^{(\cdot)},
\end{equation*}
where $\rho_{i,t+k}^{(\cdot)}$ is given by (\ref{independent_profit}) or (\ref{cooperative_profit}), depending on the trading strategy employed.

\section{Results}
\subsection{Quality and Calibration}

We present the energy scores for different models in Table~\ref{tab:energy_scores}. In this setup, since the coordinator’s base forecasts are of higher quality than those of the WPPs, the energy score of the base forecasts is even lower than that obtained through bottom-up reconciliation. Both the proposed and projection-based reconciliation models are learning-based, enabling the reconciled forecasts to outperform the base forecasts. Specifically, the proposed method attains the best performance, attributable to its universal approximation capability, which allows it to capture nonlinear dependencies and correct systematic biases.

\begin{table}[htbp]
\centering
\caption{Energy Scores across different methods (MW)}
\label{tab:energy_scores}
\begin{tabular}{lcccc}
\toprule
 & Base Forecast & Bottom-up & Projection & Proposed \\
\midrule
Energy score & 264.68 & 415.08 & 206.67 & \textbf{156.46} \\
\midrule
\bottomrule
\end{tabular}
\vspace{-1em}
\end{table}

In addition, we present the multivariate rank histogram as well as the 95\% consistency bars in Fig.~\ref{fig:Histograms}. Ideally, the ranks should follow a uniform distribution if the forecasts are well calibrated. In contrast, under-dispersive forecasts or forecasts with a systematic bias typically produce rank histograms with an excess of observations in the lower ranks \cite{thorarinsdottir2016assessing}. Therefore, both the base forecasts and the reconciled forecasts exhibit under-dispersion, likely due to the stochastic nature of the bias in the NWP system. Furthermore, we present the deviations between the multivariate rank frequencies and the uniform distribution in Table~\ref {tab:rank_scores}. Nevertheless, whereas bottom-up and projection-based reconciliation methods yield only marginal improvements in calibration, the proposed method achieves a substantial enhancement.

\begin{table}[htbp]
\vspace{-1em}
\centering
\caption{Deviations between the multivariate rank frequencies and the uniform distribution across different methods.}
\label{tab:rank_scores}
\begin{tabular}{lcccc}
\toprule
 & Base Forecast & Bottom-up & Projection & Proposed \\
\midrule
Deviation & 1.62 & 1.62 & 1.33 & \textbf{0.87} \\
\midrule
\bottomrule
\end{tabular}
\vspace{-2em}
\end{table}

\subsection{Realized Profits}

The average profits for each hour at each WPP are reported in Table \ref{tab:case 2}. It is observed that trading as an aggregation generally yields higher profits for nearly all WPPs, with the exception of the bottom-up reconciliation, where some WPPs achieve higher profits under the independent offering. However, it is worth noting that the expected imbalance costs for each WPP under bottom-up reconciliation remain lower than those incurred under the independent offering. Therefore, this distinction may be attributed to the fact that the realized costs have not yet converged to their expected values. Furthermore, both the proposed and projection-based models consistently yield higher profits for every WPP, with the proposed model achieving the best overall performance.

\begin{table}[htbp]
\centering
\caption{Comparison of average profits for each WPP (\$/hour)}
\label{tab:case 2}
\begin{tabular}{lcccc}
\toprule
Wind Site & Independent & Bottom-up & Projection & Proposed  \\
\midrule
Marble\_River & 2506.83 & 2537.76 & 2612.09 & \textbf{2627.69} \\
Noble\_Clinton & 1048.35 & 1057.95 & 1087.93 & \textbf{1095.54} \\
Noble\_Ellenburg & 847.50 & 833.23 & 866.13 & \textbf{875.34} \\
Noble\_Altona & 927.07 & 955.81 & 969.62 & \textbf{980.73} \\
Noble\_Chateaugay & 1070.72 & 1085.39 & 1105.91 & \textbf{1123.75} \\
Jericho\_Rise & 985.76 & 951.39 & 990.66 & \textbf{1007.17} \\
Bull\_Run\_II\_Wind & 1678.15 & 1864.22 & 1847.89 & \textbf{1861.10} \\
Bull\_Run\_Wind & 2559.20 & 2437.21 & 2591.69 & \textbf{2610.66} \\
\midrule
\bottomrule
\end{tabular}
\end{table}

\section{Conclusions}
In this work, we propose a ``reconcile-then-optimize'' framework for cooperative wind power offering, with a primary emphasis on achieving coherence over shared probabilistic information. Specifically, we develop a nonparametric reconciliation method with universal approximation capability, optimized using the (proper) energy score. Based on the reconciled forecasts, we construct a core allocation that is both theoretically grounded and practically implementable. The proposed framework is demonstrated on an open dataset from NREL. Results show that the method produces reconciled forecasts with superior forecast quality and calibration. Furthermore, in a simplified market simulation, the proposed approach consistently outperforms both independent offering strategies and existing reconciliation counterparts. 

However, the results also reveal that, although the proposed method improves calibration, the reconciled forecasts remain under-dispersive. This finding highlights the need for further research on enhancing calibration within forecast reconciliation frameworks. Moreover, the proposed reconciliation framework assumes full access to the observations of each WPP. Given the multi-agent nature of cooperative wind power offering, it would be valuable to investigate distributed reconciliation methods that respect data decentralization and privacy constraints.

\bibliographystyle{IEEEtran}
\bibliography{IEEEabrv,mylib}

\end{document}